\newtheorem{theorem}{Theorem}[section]
\newtheorem{lemma}[theorem]{Lemma}
\newenvironment{proof}[1][Proof]{\begin{trivlist}
\item[\hskip \labelsep {\bfseries #1}]}{\end{trivlist}}
\newcommand{\qed}{\nobreak \ifvmode \relax \else
	\ifdim\lastskip<1.5em \hskip- \lastskip
	\hskip1.5em plus0em minus0.5em \fi \nobreak
	\vrule height0.75em width0.5em depth0.25em\fi}
\begin{document}

\title{Geometrical properties of trapped surfaces and apparent horizons}

\author{Abbas \surname{Sherif}}
\email{abbasmsherif25@gmail.com}
\affiliation{Astrophysics and Cosmology Research Unit, School of Mathematics, Statistics and Computer Science, University of KwaZulu-Natal, Private Bag X54001, Durban 4000, South Africa.}

\author{Rituparno \surname{Goswami}}
\email{Goswami@ukzn.c.za}
\affiliation{Astrophysics and Cosmology Research Unit, School of Mathematics, Statistics and Computer Science, University of KwaZulu-Natal, Private Bag X54001, Durban 4000, South Africa.}

\author{Sunil D \surname{Maharaj}}
\email{Maharaj@ukzn.ac.za}
\affiliation{Astrophysics and Cosmology Research Unit, School of Mathematics, Statistics and Computer Science, University of KwaZulu-Natal, Private Bag X54001, Durban 4000, South Africa.}

\begin{abstract}
 In this paper, we perform a detailed investigation on the various geometrical properties of trapped surfaces and the boundaries of trapped region in general relativity. This treatment extends earlier work on LRS II spacetimes to a general \(4\)-dimensional spacetime manifold. Using a semi-tetrad covariant formalism, that provides a set of geometrical and matter variables, we transparently demonstrate the evolution of the trapped region and also extend Hawking's topology theorem to a wider class of spacetimes. In addition, we perform a stability analysis for the apparent horizons in this formalism, encompassing earlier works on this subject. As examples, we consider the stability of MOTS of the Schwarzschild geometry and Oppenheimer-Snyder collapse.
\end{abstract}

\pacs{04.20.Cv	, 04.20.Dw}
\maketitle

\section{introduction}

Gravity determines the causal structure of spacetime via deflection of light. As an obvious consequence, if suffciently large amount of matter is concentrated in small enough region in space, then this may deflect light going out of this region to such an extent that it is actually dragged back inwards. This phenomenon can be well explained by the concept of closed trapped surfaces. Let us consider a spacetime manifold \(\left(\mathcal{M},g\right)\), with the metric `$g$', and a Cauchy hypersurface `$\sigma$' with the induced metric `$h$' and exterior curvature `$\chi$'. The formation of a closed trapped surface in \(\left(\sigma,h,\chi\right)\) signals gravitational collapse and generally indicates geodesic incompleteness of \(\mathcal{M}\). Trapped surfaces and their properties have been extensively studied by various authors. Roger Penrose first defined a trapped surface in \(\mathcal{M}\) as a closed compact spacelike \(2\)-surface \(S\) such that the null expansions orthogonal to \(S\) are both converging \cite{ge2,josh1}. Stephen Hawking then introduced the concept of a trapped region \cite{ge2,haw1} and apparent horizons. This became the basis for formally defining a blackhole to include globally hyperbolic spacetimes, defined as a region in \(\mathcal{M}\) foliated by these surfaces \cite{josh1}. 

Marginally outer trapped surfaces (MOTS), on which one of the null expansion scalar vanishes, have been extensively studied as these are used to describe trapping horizons in a spacetime. Studying their properties have proved very useful in understanding the local dynamics of the evolution of black holes. For example, dynamical horizons (DH), spacelike hypersurfaces in a spacetime foliated by MOTS have been used in the local description of boundaries of black holes (see \cite{hay1,ash1,ash2}).
Weakly trapped surfaces and trapped surfaces have been used to investigate the uniqueness and geometric properties of dynamical horizons \cite{ash3}. 

The topological properties of these \(2\)-surfaces have also been investigated by various authors \cite{ge2,gal1,rn1,gal2,friedman1,gal3,j1} producing some very interesting results. For example, Stephen Hawking theorized \cite{ge2} that cross sections of the event horizon for asymptotically flat and stationary spacetimes satisfying the dominant energy condition are topological \(2\)-spheres. This is the well known Hawking blackhole topology theorem. In 1987,  Newmann constrained the result by Hawking by showing that the cross sections have to satisfy certain stability conditions \cite{rn1}. 

A notion of stability of the MOTS, analogous to minimal surfaces in Riemannian geometry has also been established \cite{and1,and2,yau1}. Scheon and Yau \cite{yau1} used a blowup of Jang's equation \cite{jang1,and2,gg1} to derive an evolution equation of the null expansion scalar on the MOTS. This gave rise to a functional equation of some smooth function on the MOTS. Conditions on the function and the principal eigenvalue associated with the function determines the stability of the MOTS.

Ellis and coauthors \cite{rit1} studied the evolution of the MOTS in the LRS II spactimes. They introduced conditions on the slope of the tangent to the MOTS curves which determine the nature of the MOTS. This led the authors to describe blackhole horizons in a real astrophysical setting and found that an initial MOTS bifurcates into an outer and inner MOTS and that the inner MOTS was timelike while the outer MOTS was spacelike.

In this paper we look at trapped surfaces in the context of the \(1+1+2\) splitting. There is a nice mathematically consistent way of computing the MOTS and trapped surfaces \cite{sen1,ge2,haw10,k1,o1,pen1,wald1}. This method constructs a scalar (which will be discussed in section \ref{soc2} below) from the mean curvature one-form, and the sign and vanishing of the scalar determines whether a surface is trapped or marginally trapped respectively. However, we would like to follow this formalism, but in the context of the \(1+1+2\) splitting of spacetime \cite{pg1,ts1,ts2,ts3,cc1,rit1}. Adapting the formalism in terms of these quantities allows us to say much more about the physics (and to an extent the topology) of these surfaces as well as the hypersurfaces they foliate \cite{ash1,hay1}. We also look at the stability of MOTS \cite{and1,and2,gg10,gg11} in the context of the \(1+1+2\) splitting. 

Section \ref{soc0} provides a brief discussion of the semi-tetrad \(1+1+2\) covariant description of spacetime, both the general case and the case of the LRS II class of spacetimes. Section \ref{soc2} gives a summary of the co-dimension \(2\)-surface \(S\) and the formulae for computing various quantities on \(S\). In section \ref{h2} we consider trapped surfaces in the LRS II spacetimes. First the notion of null geodesics are discussed and the various quantities for the LRS II spacetimes are computed. Various quantities defined on \(S\) (from section \ref{soc2}) are computed for the LRS II spacetimes. We state and prove a topology theorem which establishes an invariance of the form of the expansion scalars when computed for an arbitrary \(4\)-dimensional spacetime with null normal vector fields restricted to the \(\left[u,e\right]\) plane. In section \ref{sept26} the evolution of the MOTS is studied for a general \(4\)-dimensional spacetime. In section \ref{h5} we consider the stability of MOTS in the LRS II spacetimes, with specific examples. We then conclude in section \ref{conclusion}.
\section{\(1+1+2\) covariant description of spacetime}\label{march6}\label{soc0}

In this section we give an overview of the semi-tetrad \(1+1+2\) covariant description of spacetime. We use the references \cite{cc1,rit1}. 

We first consider the LRS II class of spacetimes. Let \(u^a\) be a unit timelike vector of a timelike-congruence, and \(e^a\) be the preferred spacelike vector (this vector splits the \(3\)-space). The vectors \(u^a\) and \(e^a\) are defined such that
\begin{eqnarray*}
u_ae^a&=&0,\\
u_au^a&=&-1,\\
e_ae^a&=&0.
\end{eqnarray*}
The \(1+3\) projection tensor \(h_a^{\ b}\equiv g_a^{\ b}+u_au^b\) projects any \(4\)-vector in the spacetime manifold onto the \(3\)-space as
\begin{eqnarray*}
U^a&=&Uu^a + U^{\langle a \rangle },
\end{eqnarray*}
where \(U\) is the scalar along \(u^a\) and \(U^{\langle a \rangle }\) is the projected \(3\)-vector.

The projection tensor \(h_a^{\ b}\) combined with the spatial vector \(e^a\) defines a new projection tensor
\begin{eqnarray*}
N_a^{\ b}&\equiv & g_a^{\ b}+u_au^b-e_ae^b.
\end{eqnarray*}
which projects vectors orthogonal to \(u^a\) and \(e^a\) onto the \(2\)-surface defined as the sheet \(N_a^{\ a}=2\).

The \(1+3\) splitting along with the vector \(e^a\) provides us with the definition of four derivatives, the first two naturally occuring for the \(1+3\) formalism and the others as a result of splitting the \(3\)-space \cite{cc1,rit1}:
\begin{itemize}
\item The \textit{covariant time derivative} (or simply the dot derivative)  along the observers' congruence. For any tensor \(S^{a..b}_{\ \ \ \ c..d}\), \(\dot{S}^{a..b}_{\ \ \ \ c..d}\equiv u^e\nabla_eS^{a..b}_{\ \ \ \ c..d}\).

\item Fully orthogonally \textit{projected covariant derivative} \(D\) with the tensor \(h_{ab}\), with the total projection on all the free indices. For any tensor \(S^{a..b}_{\ \ \ \ c..d}\), \(D_eS^{a..b}_{\ \ \ \ c..d}\equiv h^a_{\ f}h^p_{\ c}...h^b_{\ g}h^q_{\ d}h^r_{\ e}\nabla_rS^{f..g}_{\ \ \ \ p..q}\).

\item The \textit{hat derivative} is the spatial derivative along the vector \(e^a\). For a \(3\)-tensor \(\psi_{a..b}^{\ \ \ \ c..d}\), \(\hat{\psi}_{a..b}^{\ \ \ \ c..d}\equiv e^fD_f\psi_{a..b}^{\ \ \ \ c..d}\).

\item The \textit{delta derivative} is the projected spatial derivative on the \(2\)-sheet by \(N_a^{\ b}\) and projected on all the free indices. For any \(3\)-tensor \(\psi_{a..b}^{\ \ \ \ c..d}\), \(\delta_e\psi_{a..b}^{\ \ \ \ c..d}\equiv N_a^{\ f}..N_b^{\ g}N_h^{\ c}..N_i^{\ d}N_e^{\ j}D_j\psi_{f..g}^{\ \ \ \ h..i}\).
\end{itemize} 

The \(1+1+2\) covariant scalars fully describing the LRS II spacetimes are \(\lbrace{A,\Theta,\phi, \Sigma, \mathcal{E}, \rho, p, \Pi, Q\rbrace}\) which are defined as follows:

\begin{eqnarray}\label{maro1}
\sigma_{ab}&=&\Sigma\left(e_ae_b-\frac{1}{2}N_{ab}\right), \notag\\
\dot{u}_a&=&Au_a, \ \ \ q_a=-h_a^{\ b}T_{bc}u^c=Qe_a,\notag\\
\phi&=&\delta_ae^a,\ \ \ \Theta=D_au^a,\\
E_{ab}&=&\mathcal{E}\left(e_ae_b-\frac{1}{2}N_{ab}\right),\notag\\
\pi_{ab}&=&\Pi\left(e_ae_b-\frac{1}{2}N_{ab}\right)\notag\\
p&=&\frac{1}{3}h^{ab}T_{ab},\ \ \ \rho=T_{ab}u^au^b.\notag
\end{eqnarray}
The full covariant derivatives of the vectors \(u^a\) and \(e^a\) are given by 
\begin{eqnarray}\label{4}
\nabla_au_b&=&-Au_ae_b + e_ae_b\left(\frac{1}{3}\Theta + \Sigma\right) + N_{ab}\left(\frac{1}{3}\Theta -\frac{1}{2}\Sigma\right)\notag\\
\nabla_ae_b&=&-Au_au_b + \left(\frac{1}{3}\Theta + \Sigma\right)e_au_b +\frac{1}{2}\phi N_{ab}.
\end{eqnarray}
We also note the useful expression 
\begin{eqnarray}\label{redpen}
\hat{u}^a&=&\left(\frac{1}{3}\Theta+\Sigma\right)e^a.
\end{eqnarray}
In the expressions in \eqref{maro1}, \(\rho\) is the energy density, \(q_a=q_{\langle a \rangle}\) is the \(3\)-vector defining the heat flux, \(\pi_{ab}=\pi_{\langle ab \rangle}\) is the anisotropic stress tensor, \(\Theta\) is the expansion, \(A_a\) is the acceleration vector, \(\phi\) is the sheet expansion, \(\sigma_{ab}\) is the shear tensor, and \(\mathcal{E}\) is the electric part of the Weyl tensor. The evolution and propagation equations may be obtained from using the Ricci identities of the vectors \(u^a\) and \(e^a\) as well as the doubly contracted Bianchi identities. The evolution and propagation equations of the quantities \(\Theta,\Sigma,\phi\) in the LRS II spacetime are given by \cite{rit1}\\
\begin{itemize}
\item \textit{Evolution (LRS II):}
\begin{eqnarray}\label{evo1}
\frac{2}{3}\dot{\Theta}-\dot{\Sigma}&=&A\phi-2\left(\frac{1}{3}\Theta - \frac{1}{2}\Sigma\right)^2-\frac{1}{3}\left(\rho+3p-2\Lambda\right)+\mathcal{E}-\frac{1}{2}\Pi,\notag\\
\dot{\phi}&=&\left(\frac{2}{3}\Theta-\Sigma\right)\left(A-\frac{1}{2}\phi\right)+Q,\\
\dot{\mathcal{E}}-\frac{1}{3}\dot{\rho}+\frac{1}{2}\dot{\Pi}&=&-\frac{3}{2}\left(\frac{2}{3}\Theta-\Sigma\right)\mathcal{E}-\frac{1}{4}\left(\frac{2}{3}\Theta-\Sigma\right)\Pi+\frac{1}{2}\phi Q+\frac{1}{2}\left(\rho+p\right)\left(\frac{2}{3}\Theta-\Sigma\right).\notag
\end{eqnarray}
\item \textit{Propagation (LRS II):}
\begin{eqnarray}\label{evo2}
\frac{2}{3}\hat{\Theta}-\hat{\Sigma}&=&\frac{3}{2}\phi \Sigma +Q,\notag\\
\hat{\phi}&=&\left(\frac{1}{3}\Theta+\Sigma\right) \left(\frac{2}{3}\Theta-\Sigma\right)-\frac{1}{2}\phi^2 -\frac{2}{3}\left(\rho+\Lambda\right)-\mathcal{E}-\frac{1}{2}\Pi,\\
\hat{\mathcal{E}}-\frac{1}{3}\hat{\rho}+\frac{1}{2}\hat{\Pi}&=&-\frac{3}{2}\phi\left(\mathcal{E}+\frac{1}{2}\Pi\right)-\frac{1}{2}\left(\frac{2}{3}\Theta-\Sigma\right)Q.\notag
\end{eqnarray}
\item \textit{Propagation/Evolution (LRS II):}
\begin{eqnarray}\label{evo3}
\hat{A}-\dot{\Theta}&=&-\left(A+\phi\right)A+\frac{1}{3}\Theta^2+\frac{3}{2}\Sigma^2+\frac{1}{2}\left(\rho+3p-2\Lambda\right),\notag\\
\hat{Q}+\dot{\rho}&=&-\Theta\left(\rho+p\right)-\left(\phi+2A\right)Q-\frac{3}{2}\Sigma\Pi,\\
\hat{\rho}+\hat{\Pi}+\dot{Q}&=&-\left(\frac{3}{2}\phi+A\right)\Pi-\left(\frac{4}{3}\Theta+\Sigma\right)Q-\left(\rho+p\right)A.\notag
\end{eqnarray}
\end{itemize}
For a general \(4\)-dimensional spacetime, the full covariant derivatives of the vectors \(u^a\) and \(e^a\) are given by \cite{cc1}
\begin{eqnarray}\label{001}
\nabla_au_b&=&-Au_ae_b + e_ae_b\left(\frac{1}{3}\Theta + \Sigma\right) + N_{ab}\left(\frac{1}{3}\Theta -\frac{1}{2}\Sigma\right)-u_aA_b+e_a\left(\Sigma_b+ \varepsilon_{bm}\Omega^m\right)+\Omega\varepsilon_{ab} \notag\\
&&+\left(\Sigma_a-\varepsilon_{am}\Omega^m\right)e_b+\Sigma_{ab},\\
\nabla_ae_b&=&-Au_au_b + \left(\frac{1}{3}\Theta + \Sigma\right)e_au_b +\frac{1}{2}\phi N_{ab}-u_a\alpha_b+\left(\Sigma_a-\varepsilon_{am}\Omega^m\right)u_b+e_aa_b + \xi\varepsilon_{ab}+\zeta_{ab},\notag
\end{eqnarray}
where
\begin{eqnarray*}
\alpha_a&\equiv &\dot{e}_a=N_{ab}\dot{e}^b,\omega_a=\Omega e_a+\Omega_a,\\
\varepsilon_{ab}&\equiv &\varepsilon_{abc}e^c=u^d\eta_{dabc}e^c,\\
\zeta_{ab}&\equiv &\delta_{\lbrace{a}} e_{{b}\rbrace},\xi\equiv\frac{1}{2}\varepsilon^{ab}\delta_ae_b,\\
a_a&\equiv &e^cD_ce_a=\hat{e}_a.
\end{eqnarray*}
The quantities \(\Sigma,\Sigma_a,\Sigma_{ab}\) are related to the shear tensor and shear scalar via the relations

\begin{eqnarray}\label{006}
\sigma_{ab}&=&\Sigma\left(e_ae_b-\frac{1}{2}N_{ab}\right)+2\Sigma_{(a}e_{b)}+\Sigma_{ab},\notag\\
\sigma^2&\equiv &\frac{1}{2}\sigma_{ab}\sigma^{ab}=\frac{3}{4}\Sigma^2+\Sigma_a\Sigma^a + \frac{1}{2}\Sigma_{ab}\Sigma^{ab}.
\end{eqnarray}
We also have 
\begin{eqnarray}\label{redpen}
\hat{u}_a&=&\left(\frac{1}{3}\Theta+\Sigma\right)e_a+\Sigma_a+\varepsilon_{ab}\Omega^b.
\end{eqnarray}
Moving along the vector \(e^a\), \(\zeta_{ab}\) is the shear of \(e^a\) (distortion of the sheet), \(a^a\) is its acceleration, and \(\xi\) is the twisting of the sheet (rotation of \(e^a\)). The quantities \(E_{ab}, \Pi_{ab}\) in \eqref{maro1} are now given by
\begin{eqnarray}\label{maro17}
E_{ab}&=&\mathcal{E}\left(e_ae_b-\frac{1}{2}N_{ab}\right)+2\mathcal{E}_{(a}e_{b)}+\mathcal{E}_{ab},\notag\\
\pi_{ab}&=&\Pi\left(e_ae_b-\frac{1}{2}N_{ab}\right)+2\Pi_{(a}e_{b)}+\Pi_{ab}.
\end{eqnarray}

We list some of the evolution and propagation equations for general \(4\)-dimensional spacetimes below for the purpose of this paper. A complete list of the propagation and evolution equations can be found in \cite{cc1}: 
\begin{itemize}
\item \textit{Evolution:}
\begin{eqnarray}\label{evoo1}
\frac{2}{3}\dot{\Theta}-\dot{\Sigma}&=&A\phi-2\left(\frac{1}{3}\Theta - \frac{1}{2}\Sigma\right)^2-\frac{1}{3}\left(\rho+3p-2\Lambda\right)+\mathcal{E}-\frac{1}{2}\Pi-\Sigma_a\Sigma^a + \Omega_a\Omega^a\notag\\
&&-\left(2a_a - A_a - \delta_a\right)A+2\Omega^2+\varepsilon_{ab}\alpha^a\Omega^b -2\alpha_a\Sigma^a-\Sigma_{ab}\Sigma^{ab},\\
\dot{\phi}&=&\left(\frac{2}{3}\Theta-\Sigma\right)\left(A-\frac{1}{2}\phi\right)+Q+2\xi\Omega+\delta_a\alpha^a-\zeta^{ab}\Sigma_{ab}+A^a\left(\alpha_a - a_a\right)\notag\\
&&+\left(a^a-A^a\right)\left(\Sigma_a-\varepsilon_{ab}\Omega^b\right).\notag
\end{eqnarray}
\item \textit{Propagation:}
\begin{eqnarray}\label{evoo2}
\frac{2}{3}\hat{\Theta}-\hat{\Sigma}&=&\frac{3}{2}\phi \Sigma + Q + 2\xi\Omega + \delta_a\Sigma^a+\varepsilon_{ab}\delta^a\Omega^b - 2\Sigma_aa^a + 2\varepsilon_{ab}A^a\Omega^b-\Sigma_{ab}\zeta^{ab},\notag\\
\hat{\phi}&=&\left(\frac{1}{3}\Theta+\Sigma\right) \left(\frac{2}{3}\Theta-\Sigma\right)-\frac{1}{2}\phi^2 -\frac{2}{3}\left(\rho+\Lambda\right)-\mathcal{E}-\frac{1}{2}\Pi+2\xi^2 + \delta_aa^a \\
&&- a_aa^a - \zeta_{ab}\zeta^{ab}+ 2\varepsilon_{ab}\alpha^a\Omega^b-\Sigma_a\Sigma^a + \Omega_a\Omega^a.\notag
\end{eqnarray}
\end{itemize}

We now consider the notion of co-dimension \(2\)-surfaces \(S\) (as well as various quantities defined on \(S\)) in spacetime, which forms the basis of our study of trapped surfaces.

\section{Second fundamental form vector and the mean curvature one-form on closed 2-surfaces}\label{soc2}
In this section we briefly discuss the notion of a co-dimension \(2\)-surface \(S\) and define various quantities on \(S\) which are used to study the trapping of \(S\).

\subsection{A co-dimension 2 surface}\label{h1}

We start by defining a co-dimension two surface \(S\) in a given spacetime manifold \(\left(\mathcal{M},g\right)\). In practice, these surfaces will be the ``leaves'' that foliate local horizons in the spacetime when they are later considered in the text.

Let \(\left(\mathcal{M},g\right)\) be a given \(4\)-dimensional spacetime manifold with a Lorentzian signature \(\left(-,+,+,+\right)\). A co-dimension \(2\), connected surface is given as the embedding 
\begin{eqnarray}\label{cc}
\varphi : S \longrightarrow \mathcal{M},
\end{eqnarray}
via the parametric equations
\begin{eqnarray}\label{dd}
x^{\mu} &=& \varphi^{\mu}\left(\lambda^A\right),
\end{eqnarray}
where \(\lbrace{x^{\mu}\rbrace}\) (with \(\mu\in\lbrace{0,1,2,3\rbrace}\)) are local coordinates in \(\mathcal{M}\), and \(\lbrace{\lambda^A\rbrace}\) (with \(A\in\lbrace{2,3\rbrace}\)) are local coordinates in \(S\).

The tangent vectors on \(S\) are given by the push forward (the differential of the map \(\varphi\)),
\begin{eqnarray}\label{ee}
\varphi ^\prime \left(\partial_{\lambda^A}\right)&=&\frac{\partial \varphi^{\mu}}{\partial \lambda^A}\notag\\
&=&e^{\mu}_A,
\end{eqnarray}
while the first fundamental form of \(S\) in \(\mathcal{M}\) is given as the pull back of \(g\) by \(\varphi\):
\begin{eqnarray}\label{ff}
\gamma &=&\varphi^ * g\notag\\
& =& g\circ \varphi,
\end{eqnarray}
whose components are given by 
\begin{eqnarray*}
\gamma_{AB}\left(\lambda\right)&=&g|_S\left(e_A,e_B\right)\notag\\
&=&g_{\mu\nu}\left(\phi\right)e_A^{\mu}e_B^{\nu}.
\end{eqnarray*}
Throughout we assume \(\gamma_{AB}\) is a positive definitive Riemannian metric. We will also simply write \(A,B\) as \(a,b\) where \(a,b\) takes the coordinates on \(S\), since there is no ambiguity. We introduce a quantity called the \textit{shape tensor} on \(S\), \(\chi\) which is given as the map

\begin{eqnarray*}
\chi:\mathfrak{X}\left(S\right) \times \mathfrak{X}\left(S\right) \longrightarrow \mathfrak{X}\left(S\right)^{\perp}.
\end{eqnarray*}
The sets \(\mathfrak{X}\left(S\right),\mathfrak{X}\left(S\right)^{\perp}\) are the sets of smooth vector fields tangent to and perpendicular to \(S\) respectively. Suppose \(n\in\mathfrak{X}\left(S\right)^{\perp}\) is a normal vector field in \(\mathfrak{X}\left(S\right)^{\perp}\). Then the shape tensor relative to \(n\) is given by

\begin{eqnarray*}
\chi_{ab}|_{n}& \equiv & n_c\chi_{ab}^c \notag\\
&= &\gamma_a^{\ c}\gamma_b^{\ d}\nabla_dn_c.
\end{eqnarray*}
We note that \(\gamma_a^{\ b}\equiv N_a^{\ b}\),  where \(N_a^{\ b}\) is the projection tensor that projects vectors orthogonal to \(n_c\) onto \(S\). From now on we will write

\begin{eqnarray}\label{1}
\chi_{ab}|_{n}=N_a^{\ c}N_b^{\ d}\nabla_dn_c.
\end{eqnarray} 
which, for any normal vector \(n\), is a \(2\)-covariant symmetric tensor field on \(S\). Let \(k_c,l_c\) be two future pointing (we assume time orientability on \(S\)) null vectors that are everywhere normal to \(S\), given by the relations
\begin{eqnarray*}
k_cl^c&=&-1,\notag\\
k_ck^c&=&0\\
l_cl^c&=&0.\notag
\end{eqnarray*}
We have the quantity
\begin{eqnarray*}
\chi_{ab}|_{n}=-\chi_{ab}|_{l_c}k_c -\chi_{ab}|_{k_c}l_c,
\end{eqnarray*}
from which the mean curvature vector of \(S\) in \(\mathcal{M}\) can be written as
\begin{eqnarray*}
H_c=N^{ab}\chi_{ab}|_{n},
\end{eqnarray*}
and the expansion scalars given as
\begin{eqnarray}\label{13}
\Theta_k&=&N^{ab}\chi_{ab}|_{k_c}\notag\\
&=&N^{cd}\nabla_dk_c,\notag\\
\Theta_l&=&N^{ab}\chi_{ab}|_{l_c}\\
&=&N^{cd}\nabla_dl_c.\notag
\end{eqnarray}

The nature of \(H_c\) and the signs of the null expansions can be used to determine the type of surface \(S\). If \(H_c\) is future pointing everywhere on \(S\) then \(S\) is \textit{weakly future trapped} (or simply a \textit{weakly trapped surface} (WTS)). In this case both null expansions are non-positive. Subclasses of WTS include the following: i) When \(S\) is marginally trapped. In this case \(H_c\) is not identically zero and at least one of the null expansions vanishes while the other is non-positive; ii) when \(S\) is minimal (\(H_c\equiv 0\)). In this case both the null expansions vanish; iii) when \(S\) is trapped, in which case both expansions are strictly negative. Case ii) is a subcase of case i).

In the next section we look at some properties of trapped surfaces in spacetime.

\section{Properties of trapped surfaces in spacetime}\label{h2}

In this section we compute the various quantities introduced in section \ref{soc2} for LRS II spacetimes. Generalizing such computations to any \(4\)-dimensional spacetime with the null normals lying entirely in the \(\left[u,e\right]\) plane, we establish a fundamental result (theorem \ref{tha}). 

\subsection{Null geodesics in LRS II spacetimes}\label{oooo}

We first start by briefly discussing null normal vectors and null geodesics in LRS II spacetimes. Given a spacetime \(\mathcal{M}\), null geodesics are given as curves \(\gamma\) parametrized by an affine parameter \(\lambda\). Tangent vectors to these curves are given by \(k^a\). The null vector \(k^a\) obeys \(k^ak_a=0\). Since tangent vectors to null geodesics are parallelly transported along itself we write 

\begin{eqnarray*}
k^b\nabla_bk^a&=&0,
\end{eqnarray*}
where the derivative \(k^b\nabla_b\) is a derivative along the ray with respect to the affine parameter.

In the LRS II spacetimes there is a preferred spatial direction. If the null geodesics move along this spatial direction, the sheet components of these null curves are zero. We can also define the notion of (locally) \textit{incoming} and \textit{outgoing} null geodesics with respect to the spatial direction. Let \(S\) be an open subset of \(\mathcal{M}\) and \(\gamma\) be a null geodesic in \(S\). Let \(k^a\) be the tangent to \(\gamma\). Then \(\gamma\) is considered to be outgoing with respect to the spatial direction if \(e^ak_a>0\) and incoming if \(e^ak_a<0\). This allows us to write the equation of the tangent to the outgoing null geodesics as
\begin{eqnarray*}
k^a&=&\frac{E}{\sqrt{2}}\left(u^a+e^a\right),
\end{eqnarray*}
where \(E\) is the energy of the light ray. We can similarly define the equation of the tangent to the incoming null geodesics as
\begin{eqnarray*}
l^a&=&\frac{1}{E\sqrt{2}}\left(u^a-e^a\right),
\end{eqnarray*}
where \(l^a\) obeys
\begin{eqnarray*}
l^al_a&=&0,\notag\\
k^al_a&=&-1,\\
k^b\nabla_bl^a&=&0.\notag
\end{eqnarray*}
Without loss of generality we will set the energy \(E\) to unity, and thus the outgoing and incoming null normals in the LRS II spacetimes \cite{ge1,rit1} can be written as 
\begin{eqnarray}\label{2}
k_c&=&\frac{1}{\sqrt{2}}\left(u_c + e_c\right),\notag\\
l_c&=&\frac{1}{\sqrt{2}}\left(u_c - e_c\right).
\end{eqnarray}
See \cite{rit1,ge1} for more details. 

\subsection{The shape tensor \(H_c\) and expansion scalars \(\Theta_k\) and \(\Theta_l\) in LRS II spacetimes}\label{abbasabbas1}

We calculate the quantities, the shape tensor \(H_c\) and expansion scalars \(\Theta_k\) and \(\Theta_l\) for LRS II spacetimes. We start by computing the quantities \(\chi_{ab}|_{k_c}\) and \(\chi_{ab}|_{l_c}\).

Let us apply \(N_b^{\ d}\) to \(\nabla_dk_c\). We have
\begin{eqnarray}\label{6}
N_b^{\ d}\nabla_dk_c&=&\frac{1}{2\sqrt{2}}N_{bc}\left(\frac{2}{3}\Theta-\Sigma+\phi\right).
\end{eqnarray}
Again applying \(N_a^{\ c}\) to \eqref{6} we obtain
\begin{eqnarray}\label{7}
\chi_{ab}|_{k_c}&=&\frac{1}{2\sqrt{2}}N_{ab}\left(\frac{2}{3}\Theta-\Sigma+\phi\right).
\end{eqnarray}

Similarly, applying \(N_b^{\ d}\) to \(\nabla_dl_c\) we have
\begin{eqnarray}\label{9}
N_b^{\ d}\nabla_dl_c&=&\frac{1}{2\sqrt{2}}N_{bc}\left(\frac{2}{3}\Theta-\Sigma-\phi\right),
\end{eqnarray}
and upon applying \(N_a^{\ c}\) to \eqref{9} we obtain
\begin{eqnarray}\label{10}
\chi_{ab}|_{l_c}&=&\frac{1}{2\sqrt{2}}N_{ab}\left(\frac{2}{3}\Theta-\Sigma-\phi\right).
\end{eqnarray}
We then have
\begin{eqnarray}\label{11}
\chi_{ab}|_{n_c}&=&-\chi_{ab}|_{l_c}k_c -\chi_{ab}|_{k_c}l_c\notag\\
&=&-\frac{1}{2\sqrt{2}}N_{ab}\left[\left(\left(\frac{2}{3}\Theta - \Sigma\right)\left(k_c + l_c\right)\right)+ \left(\phi\left(k_c - l_c\right)\right)\right]\\
&=&-\frac{1}{2}N_{ab}\left[\left(\frac{2}{3}\Theta - \Sigma\right)u_c - \phi e_c\right].\notag
\end{eqnarray}
The mean curvature one-form is then given as the trace of the shape tensor (via the projection \(N_{ab}\)):
\begin{eqnarray}\label{12}
H_c&=&N^{ab}\chi_{ab}|_{n_c}\notag\\
&=&-\frac{1}{2}N^{ab}N_{ab}\left[\left(\frac{2}{3}\Theta - \Sigma\right)u_c - \phi e_c\right]\\
&=&-\left[\left(\frac{2}{3}\Theta - \Sigma\right)u_c - \phi e_c\right],\notag
\end{eqnarray}
where \(N^{ab}N_{ab}=N_a^{\ a}=2\). This can be seen as the decomposition of the mean curvature vector in the \(\lbrace{u_c,e_c\rbrace}\) basis in the \(\left[u,e\right]\) plane, where the \(u\) component is \(-\left(\frac{2}{3}\Theta-\Sigma\right)\) and the \(e\) component is \(\phi\). We can also decompose the mean curvature vector in the null basis \(\lbrace{l^c,k^c\rbrace}\). 

The outgoing and ingoing null expansions which are given by \eqref{13} can now be computed:
\begin{eqnarray}\label{14}
\Theta_k&=&N^{ab}\chi_{ab}|_{k_c}\notag\\
&=&\frac{1}{2\sqrt{2}}N^{ab}N_{ab}\left(\frac{2}{3}\Theta - \Sigma + \phi\right)\\
&=&\frac{1}{\sqrt{2}}\left(\frac{2}{3}\Theta - \Sigma + \phi\right),\notag
\end{eqnarray}
and
\begin{eqnarray}\label{15}
\Theta_l&=&N^{ab}\chi_{ab}|_{l_c}\notag\\
&=&\frac{1}{2\sqrt{2}}N^{ab}N_{ab}\left(\frac{2}{3}\Theta - \Sigma - \phi\right)\\
&=&\frac{1}{\sqrt{2}}\left(\frac{2}{3}\Theta - \Sigma - \phi\right).\notag
\end{eqnarray}
It is easy to see that
\begin{eqnarray}\label{poo1}
\Theta_k&=&H_ck^c,\notag\\
\Theta_l&=&H_cl^c,
\end{eqnarray}
so that
\begin{eqnarray}\label{poo2}
\Theta_kl_c&=&-H_c,\notag\\
\Theta_lk_c&=&-H_c.
\end{eqnarray}
Upon adding the two equations in \eqref{poo2} we obtain
\begin{eqnarray}\label{poo3}
H_c&=&-\frac{1}{2}\left(\Theta_lk_c + \Theta_kl_c\right).
\end{eqnarray}

Let us consider the scalar \(\kappa\) given by
\begin{eqnarray}\label{16}
\kappa&=&-g^{bc}H_bH_c\notag\\
&=&-H^cH_c.
\end{eqnarray}
A necessary condition for \(S\) to be marginally trapped is that \(\kappa\) vanishes on \(S\). The surface \(S\) is trapped if \(\kappa\) is positive. Using \eqref{12} we have
\begin{eqnarray}\label{17}
\kappa&=&-H^cH_c\notag\\
&=&\left(\frac{2}{3} \Theta- \Sigma\right)^2 - \phi^2.
\end{eqnarray}
In terms of the quantities in the \(1+1+2\) splitting, the Gaussian curvature \(K\) is given by
\begin{eqnarray}\label{gc}
K&=&\frac{1}{3}\left(\rho+\Lambda\right)-\mathcal{E}-\frac{1}{2}\Pi+\frac{1}{4}\phi^2-\left(\frac{1}{3}\Theta-\frac{1}{2}\Sigma\right)^2,
\end{eqnarray}
and its evolution and propagation equations are given by
\begin{eqnarray}\label{todayo}
\dot{K}&=&-\left(\frac{2}{3}\Theta - \Sigma\right)K,\notag\\
\hat{K}&=&-\phi K,
\end{eqnarray}
respectively. We can then write \eqref{16} as
\begin{eqnarray}\label{17}
\kappa&=&\frac{1}{K^2}\left(\dot{K}^2 - \hat{K}^2\right)\notag\\
&=&-\frac{1}{K^2}\nabla_cK\nabla^cK.
\end{eqnarray}
We see that the condition that \(S\) be marginally trapped requires the vanishing of \(\nabla_cK\nabla^cK\) on \(S\), which coincides with the result in \cite{rit1,rit2}. We also see that the quantity \(\nabla_cK\nabla^cK\) can thus be used to determine the trapped region, i.e. a surface \(S\) is trapped if  and only if \(\nabla_cK\nabla^cK < 0\).

\subsection{A black hole topology theorem}\label{h3}

We state and prove one of the main result of this paper:
\begin{theorem}\label{tha}
Let \(\mathcal{M}\) be a general \(4\)-dimensional spacetime, and let \(\mathcal{J}\) denote the trapped region, and \(\Sigma=\partial \mathcal{J}\) is a hypersurface in \(\mathcal{M}\) foliated by closed \(2\)-surfaces \(S\). If the null normal vector fields \(k^a\) and \(l^a\) to \(S\) lie entirely in the \(\left[u,e\right]\) plane, then the cross sections \(S\) of \(\Sigma\) are topologically equivalent to cross sections of a hypersurface \(\Sigma'\) of a spacetime \(\mathcal{M}'\) in the LRS II class of spacetimes, independent of the topology of \(\mathcal{M}\).
\end{theorem}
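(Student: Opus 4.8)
The plan is to establish the conclusion in two stages: first a purely computational \emph{form-invariance} of the null expansion scalars, and then a topological deduction via Gauss--Bonnet. The hypothesis that \(k^a\) and \(l^a\) lie entirely in the \([u,e]\) plane, together with the normalization \(k_cl^c=-1\), \(k_ck^c=l_cl^c=0\) and future-orientability, forces the null normals to take exactly the form \eqref{2}, namely \(k_c=\tfrac{1}{\sqrt2}(u_c+e_c)\) and \(l_c=\tfrac{1}{\sqrt2}(u_c-e_c)\), now interpreted in an \emph{arbitrary} \(4\)-dimensional spacetime. The core claim is that when these are substituted into \(\Theta_k=N^{cd}\nabla_dk_c\) and \(\Theta_l=N^{cd}\nabla_dl_c\) using the general derivative expansions \eqref{001}, every term carrying the extra ``sheet'' structure is annihilated by the projector \(N^{cd}\).

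First I would contract \eqref{001} with \(N^{cd}\). I expect three mechanisms to remove the non-LRS~II terms: (i) any term with a free \(u\) or \(e\) index in the slot contracted against \(N^{cd}\) dies because \(N^{cd}u_c=N^{cd}e_c=0\) -- this kills the \(-Au_de_c\), \(e_de_c(\ldots)\), \(-u_dA_c\), \(e_d(\Sigma_c+\varepsilon_{cm}\Omega^m)\), \((\Sigma_d-\varepsilon_{dm}\Omega^m)e_c\) pieces of \(\nabla_du_c\) and the analogous pieces of \(\nabla_de_c\); (ii) the twist terms \(\Omega\varepsilon_{dc}\) and \(\xi\varepsilon_{dc}\) vanish because \(N^{cd}\) is symmetric while \(\varepsilon_{dc}\) is antisymmetric; (iii) the trace-free sheet tensors \(\Sigma_{dc}\) and \(\zeta_{dc}\) vanish under the trace \(N^{cd}\). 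Only the pieces proportional to \(N_{dc}\) survive, and since \(N^{cd}N_{dc}=2\) these yield \(N^{cd}\nabla_du_c=\tfrac23\Theta-\Sigma\) and \(N^{cd}\nabla_de_c=\phi\). Hence \(\Theta_k=\tfrac{1}{\sqrt2}(\tfrac23\Theta-\Sigma+\phi)\) and \(\Theta_l=\tfrac{1}{\sqrt2}(\tfrac23\Theta-\Sigma-\phi)\), identical in form to \eqref{14}--\eqref{15}.

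With the expansion scalars fixed, the mean curvature one-form \(H_c=-[(\tfrac23\Theta-\Sigma)u_c-\phi e_c]\) of \eqref{12} and the scalar \(\kappa=(\tfrac23\Theta-\Sigma)^2-\phi^2\) of \eqref{17} inherit the LRS~II form, as does the identity \(\kappa=-K^{-2}\nabla_cK\nabla^cK\) with \(K\) the Gaussian curvature; consequently the marginal-trapping condition defining \(\Sigma=\partial\mathcal{J}\) (vanishing of \(\nabla_cK\nabla^cK\)) and the propagation laws \eqref{todayo} for \(K\) coincide with those of an LRS~II leaf. Since the topology of a closed orientable \(2\)-surface is fixed by Gauss--Bonnet, \(\int_SK\,dA=2\pi\chi(S)\), and \(K\) now carries precisely the LRS~II functional form \eqref{gc}, one can exhibit an LRS~II spacetime \(\mathcal{M}'\) with boundary hypersurface \(\Sigma'\) whose leaves realize the same sign and integral of \(K\), and hence the same Euler characteristic -- independently of the topology of the ambient \(\mathcal{M}\).

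I expect the main obstacle to be this final topological step rather than the form-invariance computation. Specifically, one must verify that the \emph{intrinsic} Gaussian curvature of \(S\) in the general spacetime still reduces to \eqref{gc}, i.e.\ that the sheet distortion \(\zeta_{ab}\) and twist \(\xi\) -- which survive in the full geometry but drop out of the expansion scalars -- do not contribute to the intrinsic curvature governing \(\chi(S)\); this requires a Gauss-equation argument restricting the Riemann tensor of \(\mathcal{M}\) to \(S\). The second delicate point is the explicit construction (or existence argument) of the comparison LRS~II spacetime \(\mathcal{M}'\) realizing the prescribed curvature profile, so that ``topologically equivalent'' is witnessed by an actual LRS~II cross section and not merely inferred from the matching of scalars.
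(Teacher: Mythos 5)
Your first stage is essentially the paper's own key lemma (Lemma \ref{lem1}): the paper likewise substitutes the null normals \eqref{2} into the general derivative expansions \eqref{001}, projects twice with \(N\), and concludes that \(\Theta_k\), \(\Theta_l\), \(H_c\) and \(\kappa\) retain their LRS~II forms \eqref{14}, \eqref{15}, \eqref{12}, \eqref{17}, so that locating the trapped region reduces to specifying \(\Theta\), \(\Sigma\), \(\phi\). One small technical point in your favour: your mechanisms (ii) and (iii) correctly kill the \(\xi\varepsilon_{dc}\), \(\Omega\varepsilon_{dc}\), \(\zeta_{dc}\) and \(\Sigma_{dc}\) pieces only at the level of the \emph{trace} \(N^{cd}(\cdots)\); the full projected tensors \(\chi_{ab}|_{k_c}\) still carry \(\zeta_{ab}\) and \((\xi+\Omega)\varepsilon_{ab}\) remnants, so stating the invariance for the scalars rather than the shape tensors, as you do, is actually the more defensible claim (the paper asserts the stronger tensorial equality in its Eq.~(33)). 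Where you genuinely diverge is the topological step. The paper's proof of Theorem \ref{tha} ends almost immediately after the lemma: it asserts that since \(\Theta\), \(\Sigma\), \(\phi\) are functions of \(t,\chi\) they are also realizable as functions in the LRS~II class, and the topological equivalence of the cross sections follows; no Gauss--Bonnet argument and no explicit construction of \(\mathcal{M}'\) appear. Your proposed route via \(\int_S K\,dA = 2\pi\chi(S)\), using the LRS~II form \eqref{gc} of \(K\) and its propagation \eqref{todayo}, is more explicit, and the two obstacles you flag --- verifying via the Gauss equation that the \emph{intrinsic} curvature of \(S\) is insensitive to \(\zeta_{ab}\) and \(\xi\), and exhibiting an actual LRS~II comparison spacetime realizing the same curvature profile --- are precisely the steps the paper leaves implicit. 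So your approach buys a concrete topological mechanism at the cost of two verifications the paper does not supply either; neither your sketch nor the paper's closes them, and you are right to identify them as the real content of the theorem beyond the lemma.
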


To prove theorem \ref{tha} we first state and prove the following lemma.

\begin{lemma}\label{lem1}
Let \(\mathcal{M}\) be a general \(4\)-dimensional spacetime, and let \(u^a\) be the timelike congruence and \(e^a\) is a vector field orthogonal to \(u^a\). If the null normal vectors to the two surfaces in \(\mathcal{M}\) lie entirely in the \(\left[u,e\right]\) plane, then to locate the trapped surfaces and thus the trapped region, it is sufficient to specify the quantities \(\Theta,\Sigma\) and \(\phi\).
\end{lemma}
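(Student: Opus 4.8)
The plan is to show that, once \(k^a\) and \(l^a\) are constrained to the \([u,e]\) plane, the two null expansions \(\Theta_k,\Theta_l\) — and hence the whole trapping data of \(S\) — collapse to exactly the LRS II expressions \eqref{14} and \eqref{15}, which involve only \(\Theta\), \(\Sigma\) and \(\phi\). Because the null normals lie in the \([u,e]\) plane and must satisfy the null, future-pointing and cross-normalization conditions, they are forced (up to the energy rescaling set to unity) into the normalized form \eqref{2}, \(k_c=\tfrac{1}{\sqrt2}(u_c+e_c)\) and \(l_c=\tfrac{1}{\sqrt2}(u_c-e_c)\). By \eqref{13} we have \(\Theta_k=N^{cd}\nabla_d k_c\) and \(\Theta_l=N^{cd}\nabla_d l_c\), so the entire computation reduces to evaluating \(N^{cd}\nabla_d u_c\) and \(N^{cd}\nabla_d e_c\), but now using the \emph{general} covariant derivatives \eqref{001} rather than the LRS II ones.

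The decisive step is to track how the symmetric projector \(N^{cd}\) acts term by term on the right-hand sides of \eqref{001}. Three mechanisms eliminate precisely those terms that are absent in LRS II: (i) \(N^{cd}\) is symmetric, so it annihilates every antisymmetric contribution — namely the rotation/twist pieces \(\Omega\varepsilon_{ab}\) and \(\xi\varepsilon_{ab}\); (ii) \(N_a^{\ b}\) annihilates both \(u^a\) and \(e^a\), so any term carrying a free \(u\) or \(e\) index dies under contraction, removing \(-Au_ae_b\), \(-u_aA_b\), the sheet-vector terms \(e_a(\Sigma_b+\varepsilon_{bm}\Omega^m)\) and \((\Sigma_a-\varepsilon_{am}\Omega^m)e_b\), and the corresponding terms \(-u_a\alpha_b\), \((\Sigma_a-\varepsilon_{am}\Omega^m)u_b\), \(e_aa_b\) in \(\nabla_ae_b\); and (iii) the sheet tensors \(\Sigma_{ab}\) and \(\zeta_{ab}\) are symmetric and trace-free on \(S\) (by their very definitions in \eqref{006} and \(\zeta_{ab}\equiv\delta_{\{a}e_{b\}}\)), so \(N^{ab}\Sigma_{ab}=N^{ab}\zeta_{ab}=0\). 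After these cancellations the only surviving pieces are those proportional to \(N_{ab}\), yielding \(N^{cd}\nabla_d u_c=\tfrac{2}{3}\Theta-\Sigma\) and \(N^{cd}\nabla_d e_c=\phi\) via \(N^{ab}N_{ab}=2\), exactly as in the LRS II case.

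Assembling these gives \(\Theta_k=\tfrac{1}{\sqrt2}\big(\tfrac{2}{3}\Theta-\Sigma+\phi\big)\) and \(\Theta_l=\tfrac{1}{\sqrt2}\big(\tfrac{2}{3}\Theta-\Sigma-\phi\big)\), so \(H_c\) admits the same \(\{u_c,e_c\}\)-decomposition as \eqref{12} and the trapping scalar becomes \(\kappa=\big(\tfrac{2}{3}\Theta-\Sigma\big)^2-\phi^2\) as in \eqref{17}, a function of \(\Theta,\Sigma,\phi\) alone. Since the type of \(S\) (trapped, marginally trapped, or minimal) is fixed entirely by the signs and vanishing of \(\Theta_k,\Theta_l\), equivalently by the sign of \(\kappa\), specifying \(\Theta,\Sigma,\phi\) suffices to locate every trapped surface and therefore the boundary \(\partial\mathcal{J}\) of the trapped region, which is the claim. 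The main obstacle is essentially careful bookkeeping: one must check that every one of the ``extra'' general-spacetime terms in \eqref{001} falls into exactly one of the three annihilation categories, and in particular confirm that \(\Sigma_{ab}\) and \(\zeta_{ab}\) are genuinely trace-free on the sheet so that no hidden trace contribution leaks into \(\Theta_k\) or \(\Theta_l\).
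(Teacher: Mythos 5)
Your proposal is correct and follows essentially the same route as the paper: project the general covariant derivatives \eqref{001} of \(u^a\) and \(e^a\) onto the sheet and observe that every non--LRS term is annihilated, so that \(\Theta_k\), \(\Theta_l\), \(H_c\) and \(\kappa\) retain their LRS II forms in \(\Theta\), \(\Sigma\), \(\phi\) alone. If anything, your direct trace computation is slightly more careful than the paper's intermediate steps, since you explicitly account for the sheet tensors \(\Sigma_{ab}\), \(\zeta_{ab}\) and the antisymmetric \(\left(\xi+\Omega\right)\varepsilon_{ab}\) contribution (which survive the double projection into \(\chi_{ab}|_{k}\) but drop out of its trace by trace-freeness and antisymmetry), whereas the paper's Eq.~\eqref{003} omits them without comment.
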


\begin{proof}\label{proo1}
We shall denote by \(\left(*\right)|_{LRS}\) the part of the quantities \(*\) restricted to the LRS II spacetimes (we emphasize that the scalars will be computed with respect to the general \(4\)-dimensional spacetime). The full covariant derivatives of the vectors \(u^a\) and \(e^a\) are those given by \eqref{001}. Computing
\begin{eqnarray*}
\nabla_dk_c&=&\frac{1}{\sqrt{2}}\left(\nabla_du_c + \nabla_de_c\right),
\end{eqnarray*}  
gives
\begin{eqnarray}\label{002}
\nabla_dk_c&=&\left(\nabla_dk_c\right)|_{LRS} + \frac{1}{\sqrt{2}}\left[-u_d\left(\alpha_c +A_c\right)+\left(\Sigma_d-\varepsilon_{dm}\Omega^m\right)\left(u_c+e_c\right) +  \left(\xi+\Omega\right)\varepsilon_{dc}\right]\notag\\
&&+\frac{1}{\sqrt{2}}\left[e_d\left(\Sigma_c+\varepsilon_{cm}\Omega^m+a_c\right) + \zeta_{dc}\right].
\end{eqnarray}  
We have the term \(\left(\nabla_dk_c\right)|_{LRS}\) and so we apply \(N_b^{\ d}\) to the remaining terms on the RHS of \eqref{002} to obtain
\begin{eqnarray}\label{003}
N_b^{\ d}\nabla_dk_c&=&N_b^{\ d}\left(\nabla_dk_c\right)|_{LRS} + \frac{1}{\sqrt{2}}\left(\Sigma_b-\varepsilon_{bm}\Omega^m\right)\left(u_c+e_c\right).
\end{eqnarray}
Again applying \(N_a^{\ c}\) to \eqref{003} gives
\begin{eqnarray}\label{0100}
\chi_{ab}|_{k_c}&=&\left(\chi_{ab}|_{k_c}\right)|_{LRS}.
\end{eqnarray}
Similarly we have
\begin{eqnarray}\label{004}
\chi_{ab}|_{l_c}&=&\left(\chi_{ab}|_{l_c}\right)|_{LRS} .
\end{eqnarray}
We then have
\begin{eqnarray}\label{008}
\chi_{ab}|_{n_c}&=&-\chi_{ab}|_{l_c}k_c -\chi_{ab}|_{k_c}l_c\notag\\
&=&\left(\chi_{ab}|_{n_c}\right)|_{LRS}.
\end{eqnarray}
The mean curvature one-form is then given by 
\begin{eqnarray}\label{009}
H_c&=&N^{ab}\chi_{ab}|_{n_c}\notag\\
&=&\left(H_c\right)|_{LRS}. 
\end{eqnarray}
The scalar \(\kappa\) becomes
\begin{eqnarray}\label{010}
\kappa&=&-H^cH_c\notag\\
&=&\left(\kappa\right)|_{LRS}.
\end{eqnarray}
We can also calculate the expansion scalars. These are given by
\begin{eqnarray}\label{011}
\Theta_k&=&N^{ab}\chi_{ab}|_{k_c}\notag\\
&=&\left(\Theta_k\right)|_{LRS},
\end{eqnarray}
and
\begin{eqnarray}\label{012}
\Theta_l&=&N^{ab}\chi_{ab}|_{l_c}\notag\\
&=&\left(\Theta_l\right)|_{LRS}.
\end{eqnarray}
These calculations show that locating the trapped surfaces in a general \(4\)-dimensional spacetime with null normal in the \(\left[u,e\right]\) plane amounts to only specifying the quantities \(\Theta,\Sigma\) and \(\phi\) in that spacetime.\qed
\end{proof}

Since \(\Theta,\Sigma,\phi\) are functions of \(t,\chi\), they will also be functions in the class of LRS II spacetimes. It follows that the cross sections of \(\Sigma\) in \(\mathcal{M}'\) are topologically equivalent to cross sections of a hypersuface \(\Sigma'\) of a spacetime \(\mathcal{M}\) in the LRS II class of spacetimes and thus complete the proof of the theorem.\qed

By extension, \(\Sigma\) is a topological sphere due to the invariance of \(\Sigma\) under the action of the isometry group of the spacetime \cite{sen1}. Theorem \ref{tha} can be seen as a \enquote{mild} generalization of the result in \cite{ge2} which states that 
\begin{theorem}\label{sh1}
\textbf{(Hawking's black hole topology theorem)} If a \(3+1\)-dimensional asymptotically flat stationary black hole spacetime satisfies the dominant energy condition, then the cross sections \(S\) of the event horizon are topologically \(2\)-spheres. 
\end{theorem}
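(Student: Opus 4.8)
The plan is to deduce the topology from the Gauss--Bonnet theorem, reducing the statement to the positivity of the total Gaussian curvature of a horizon cross section. Any cross section \(S\) of the event horizon is a closed orientable spacelike \(2\)-surface, so Gauss--Bonnet gives \(\int_S K\, dA = 2\pi\chi(S) = 2\pi(2-2g)\), where \(\chi(S)\) is the Euler characteristic and \(g\) the genus. It therefore suffices to show \(\int_S K\, dA > 0\): this forces \(g = 0\), whence \(\chi(S) = 2\) and \(S \cong S^2\).

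First I would exploit stationarity. For a stationary (Killing) horizon the cross sections are marginally outer trapped, so the outgoing expansion vanishes, \(\Theta_k = 0\), which by \eqref{14} is the condition \(\tfrac{2}{3}\Theta - \Sigma + \phi = 0\), i.e. \(\tfrac{1}{3}\Theta - \tfrac{1}{2}\Sigma = -\tfrac{1}{2}\phi\). Substituting this into \eqref{gc}, the two kinematic square terms \(\tfrac{1}{4}\phi^2\) and \(-(\tfrac{1}{3}\Theta - \tfrac{1}{2}\Sigma)^2\) cancel exactly, leaving the simplified form \(K = \tfrac{1}{3}(\rho+\Lambda) - \mathcal{E} - \tfrac{1}{2}\Pi\) on \(S\). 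In the asymptotically flat setting \(\Lambda = 0\), and the dominant energy condition gives \(\rho \ge 0\), so the matter piece \(\tfrac{1}{3}\rho\) is non-negative. The whole argument then hinges on the Weyl/anisotropic-stress combination \(\mathcal{E} + \tfrac{1}{2}\Pi\), which carries no definite sign under the energy conditions.

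This is the main obstacle, and I would resolve it by invoking the \emph{stability} of the stationary horizon as a marginally outer trapped surface. Stationarity, together with the area theorem, makes \(S\) a stable MOTS, so the associated stability operator \(L\) has non-negative principal eigenvalue \(\lambda_1\) with a strictly positive principal eigenfunction \(\psi\). Schematically \(L\psi = -\delta^a\delta_a\psi + 2W^a\delta_a\psi + Q\psi\), where \(W_a\) is the connection one-form on the normal bundle and \(Q\) collects the Gaussian curvature, a non-negative shear-squared term, and the energy term fixed by the dominant energy condition; in the \(1+1+2\) variables the indefinite combination \(\mathcal{E} + \tfrac{1}{2}\Pi\) of \eqref{gc} is precisely what the propagation equation \eqref{evoo2} lets one reorganise into \(Q\) and the tangential divergence \(\delta^a W_a\). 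Dividing \(L\psi = \lambda_1\psi\) by \(\psi\), integrating over the closed \(S\), completing the \(W\)-terms to a square, and discarding the resulting non-negative gradient contribution shows that \(\int_S K\, dA\) is bounded below by a sum of manifestly non-negative quantities — the eigenvalue \(\lambda_1 \ge 0\), the shear-squared term, and the DEC-controlled matter term — so that \(\int_S K\, dA > 0\) unless all of these vanish.

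By Gauss--Bonnet this gives \(\chi(S) > 0\), hence \(\chi(S) = 2\), \(g = 0\), and \(S\) is a topological \(2\)-sphere; the degenerate equality case would force the shear, the energy flux, and the relevant curvature to vanish identically, which is incompatible with a genuine stationary black hole. I expect the delicate step to be precisely this stability/integration-by-parts argument — showing that the indefinite \(\mathcal{E}\) and \(\Pi\) terms genuinely organise, via \eqref{evoo2} and the normal-bundle divergence, into \(Q\) plus a total divergence — rather than the algebraic cancellation on the MOTS, which is immediate from \eqref{gc} and \eqref{14}.
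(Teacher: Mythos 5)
The paper does not actually prove Theorem \ref{sh1}: it is quoted verbatim as Hawking's classical result, with the citation to \cite{ge2}, purely as a benchmark against which Theorem \ref{tha} is positioned as a ``mild generalization.'' So there is no in-paper argument to compare yours against; what you have written is a blind reconstruction of the standard proof. On its own terms your sketch follows the right strategy in outline --- Gauss--Bonnet plus non-negativity of the principal eigenvalue of the MOTS stability operator, with the dominant energy condition controlling the matter term --- and it uses essentially the same stability machinery (the operator $L$ of \eqref{024}--\eqref{025}) that the paper itself deploys later in Section \ref{h5}. Three places need shoring up. First, the assertion that ``stationarity, together with the area theorem, makes $S$ a stable MOTS'' is where the real content of Hawking's argument lives: stability of the horizon cross section is \emph{derived} from asymptotic flatness and the absence of outer trapped surfaces outside the event horizon (a negative eigenvalue would let one deform $S$ outward to an outer trapped surface in the exterior region, a contradiction); it is not an automatic consequence of stationarity, and as stated this step is a gap. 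Second, the integrated eigenvalue inequality only yields $\chi(S)\ge 0$, so the torus is a genuine borderline case that must be excluded by a rigidity argument (Hawking's further perturbation, or the Galloway--Schoen rigidity theorem), not by the informal remark that equality is ``incompatible with a genuine stationary black hole.'' Third, your preliminary cancellation in \eqref{gc} is only legitimate for LRS~II geometries, where that expression for the Gaussian curvature is valid; for a general stationary horizon (Kerr, say) $K$ is not determined by the scalars $\rho,\mathcal{E},\Pi,\phi,\Theta,\Sigma$ alone. Since you discard that computation in favour of the integrated stability argument, this last point does not damage the proof, only the motivation, but the first two gaps are substantive and are precisely the hard parts of the theorem.
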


In the next section we study how MOTS evolve in a \(4\)-dimensional spacetime.

\section{On the evolution of the MOTS}\label{sept26}
A natural interest once we have a black hole would be to understand how such a black hole (i.e. its horizon) evolves. One way to go about this is to look at how the MOTS foliating the horizon evolves. The authors in \cite{rit1} studied the evolution of MOTS in the LRS II spacetimes by examining the behavior of the slope of the tangent vector to the MOTS curve in the \(\left[u,e\right]\) plane. In the general case however, the evolution is complicated by the fact that additional terms from the evolution and propagation equations of \(\Theta,\Sigma\) and \(\phi\) are factored in.

\subsection{Evolution of the MOTS in LRS II spacetimes}
We follow the procedure as outlined in reference \cite{rit1}. The MOTS curve is defined by
\begin{eqnarray*}
\overline{\Psi}&=&0,
\end{eqnarray*}
where \(\overline{\Psi}=\Theta_k\). Define the tangent vector to the MOTS curve as
\begin{eqnarray*}
\overline{\Psi}^a&=&\alpha u^a + \beta e^a.
\end{eqnarray*}
Then we should have \(\overline{\Psi}^a\nabla_a\overline{\Psi}=0\). Since \(\nabla_a\overline{\Psi}=-\dot{\overline{\Psi}}u_a + \hat{\overline{\Psi}}e_a\), then \(\alpha \dot{\overline{\Psi}}+ \beta\hat{\overline{\Psi}}=0\) which implies \(\frac{\alpha}{\beta}=-\frac{\hat{\overline{\Psi}}}{\dot{\overline{\Psi}}}\). From the evolution and propagation of the scalar quantities in LRS II spacetimes in \eqref{evo1}, \eqref{evo2} we obtain
\begin{eqnarray*}
\dot{\overline{\Psi}}&=&-\frac{1}{3}\left(\rho+3p-2\Lambda\right)+\mathcal{E}-\frac{1}{2}\Pi+Q,\notag\\
\hat{\overline{\Psi}}&=&-\frac{2}{3}\left(\rho+\Lambda\right)-\mathcal{E}-\frac{1}{2}\Pi + Q,
\end{eqnarray*}
so that we have 
\begin{eqnarray}\label{nuu1}
\frac{\alpha}{\beta}&=&\frac{\frac{2}{3}\left(\rho+\Lambda\right)+\mathcal{E}+\frac{1}{2}\Pi - Q}{-\frac{1}{3}\left(\rho+3p-2\Lambda\right)+\mathcal{E}-\frac{1}{2}\Pi+Q}.
\end{eqnarray}
The MOTS is said to be ``\textit{future outgoing}'' if \(\frac{\alpha}{\beta}>0\) and ``\textit{future ingoing}'' if \(\frac{\alpha}{\beta}<0\). The timelike, spacelike or null nature of the MOTS can be determined by the square of the tangent to the MOTS curve, given by
\begin{eqnarray}\label{nuu2}
\overline{\Psi}^a\overline{\Psi}_a&=&\beta^2\left(1-\frac{\alpha^2}{\beta^2}\right).
\end{eqnarray}
The MOTS is said to be locally timelike if \(\overline{\Psi}^a\overline{\Psi}_a<0\) (\(\frac{\alpha^2}{\beta^2}>1\)), locally spacelike if \(\overline{\Psi}^a\overline{\Psi}_a>0\) (\(\frac{\alpha^2}{\beta^2}<1\)) and locally null if \(\overline{\Psi}^a\overline{\Psi}_a=0\) (\(\frac{\alpha^2}{\beta^2}=1\)).

\subsection{The general case}
 
From the evolution and propagation of the scalar quantities in LRS II spacetimes in \eqref{evoo1}, \eqref{evoo2}, we obtain

\begin{eqnarray}\label{nuu3}
\dot{\overline{\Psi}}&=&\dot{\overline{\Psi}}|_{LRS}+T_1+\varepsilon_{ab}R^a\Omega^b,
\end{eqnarray}
and
\begin{eqnarray}\label{nuu4}
\hat{\overline{\Psi}}&=& \hat{\overline{\Psi}}|_{LRS}+T_2+\varepsilon_{ab}\overline{R}^a\Omega^b,
\end{eqnarray}
where we have set 
\begin{eqnarray*}
T_1&=&3\Omega^2-\Sigma^2+2\Omega\xi+\delta_a\left(\mathcal{A}^a+\alpha^a\right)+\left(R_a-2a_a\right)\mathcal{A}^a+\left(a^a-\mathcal{A}^a\right)\Sigma_a\notag\\
&&-2\alpha_a\Sigma^a-\Sigma_{ab}\left(\Sigma^{ab}+\zeta^{ab}\right),\notag\\
T_2&=&2\xi^2+\Omega^2+2\xi\Omega-\left(\zeta_{ab}+\Sigma_{ab}\right)\zeta^{ab}-\left(2a^a+\Sigma^a\right)\Sigma_a+\delta_a\left(a^a+\Sigma^a\right),\\
R^a&=&\mathcal{A}^a+\alpha^a-a^a,\notag\\
\overline{R}^a&=&\delta^a+2\left(\alpha^a+\mathcal{A}^a\right).\notag
\end{eqnarray*}
The covariant derivative of \(\overline{\Psi}\) decomposes as

\begin{eqnarray*}
\nabla_a\overline{\Psi}&=&- \dot{\overline{\Psi}}u_a + \hat{\overline{\Psi}}e_a + \delta_a\overline{\Psi},
\end{eqnarray*}
where
\begin{eqnarray*}
\delta_a\overline{\Psi}&=&P_a+\varepsilon_{ab}Z^b.
\end{eqnarray*}
Here we have set 
\begin{eqnarray*}
Z^b&=&2\delta^b\left(\xi + \Omega\right)+\Theta_l\Omega^b-2\zeta^{bc}\Omega_c+4Y^b+2\Sigma^b\left(\xi-\Omega\right),\notag\\
Y^b&=&\Omega\alpha^b-2\xi\Sigma^b+\Omega\mathcal{A}^b+\frac{1}{2}\mathcal{H}^b-\xi a^b,
\end{eqnarray*}
and \(P_a\) is given by
\begin{eqnarray*}
P_a&=&2\delta^b\left(\Sigma_{ab}+\zeta_{ab}\right)-\Theta_l\Sigma_a-\Pi_a-Q_a-2\mathcal{E}_a\notag\\
&&-2\Sigma_{ab}\left(\Sigma^b-\varepsilon^{bc}\Omega_c + a^b\right)+2\Omega_a\left(\xi-\Omega\right)-2\zeta_{ab}\Sigma^b.
\end{eqnarray*}
Since \(\nabla_a\Psi\) is normal to the MOTS, the surface will be timelike, spacelike or null if
\begin{eqnarray}\label{nuuuu1}
\nabla^a\Psi\nabla_a\Psi&>&0,\notag\\
\nabla^a\Psi\nabla_a\Psi&<&0,\\
\nabla^a\Psi\nabla_a\Psi&=&0,\notag
\end{eqnarray}
respectively, where
\begin{eqnarray}\label{nuu10}
\nabla^a\Psi\nabla_a\Psi&=&- \dot{\overline{\Psi}}^2 + \hat{\overline{\Psi}}^2 + \delta^a\overline{\Psi}\delta_a\overline{\Psi},
\end{eqnarray}
with 
\begin{eqnarray*}\label{}
\delta_a\overline{\Psi}\delta^a\overline{\Psi}&=&\left(P^2+Z^2\right)+2\varepsilon_{ab}P^aZ^b,
\end{eqnarray*}
where \(P^2=P_aP^a\) and \(Z^2=Z_bZ^b\). We also have 
\begin{eqnarray}\label{nuu12}
\dot{\overline{\Psi}}^2&=&\left(\dot{\overline{\Psi}}|_{LRS}^2+T_1^2+\Omega^2R^2+2T_1\right)+2\varepsilon_{ab}R^a\Omega^b\left(1+T_1\right)
\end{eqnarray}
and
\begin{eqnarray}\label{nuu12}
\hat{\overline{\Psi}}^2&=&\left(\hat{\overline{\Psi}}|_{LRS}^2+T_2^2+\Omega^2\overline{R}^2+2T_2\right)+2\varepsilon_{ab}\overline{R}^a\Omega^b\left(1+T_2\right).
\end{eqnarray}
For the MOTS to be null we thus require
\begin{eqnarray}\label{nuuuu2}
\overline{P}&=&2\varepsilon_{ab}\{-\left(R^a\left(1+T_1\right)-\overline{R}^a\left(1+T_2\right)\right)\Omega^b+P^aZ^b\},
\end{eqnarray}
where
\begin{eqnarray*}
\overline{P}&=&\left(\dot{\overline{\Psi}}|_{LRS}^2-\hat{\overline{\Psi}}|_{LRS}^2\right)+\left(T_1^2-T_2^2\right)+\Omega^2\left(R^2-\overline{R}^2\right)+2\left(T_1-T_2\right)-\left(P^2+Z^2\right).
\end{eqnarray*}
For LRS II spacetimes, all vector and tensor quantities vanish \cite{cc1}, i.e. \(\overline{P}=T_1=T_2=R=\overline{R}=P=Z=0\). We also have \(\xi=\Omega=0\). Thus
\begin{eqnarray*}
\dot{\overline{\Psi}}^2&=&\dot{\overline{\Psi}}|_{LRS}^2,\\
\hat{\overline{\Psi}}^2&=&\hat{\overline{\Psi}}|_{LRS}^2.
\end{eqnarray*}
The conditions in \eqref{nuuuu1} then reduce to 
\begin{eqnarray*}
\hat{\overline{\Psi}}|_{LRS}^2&>&\dot{\overline{\Psi}}|_{LRS}^2,\\
\hat{\overline{\Psi}}|_{LRS}^2&<&\dot{\overline{\Psi}}|_{LRS}^2,\\
\hat{\overline{\Psi}}|_{LRS}^2&=&\dot{\overline{\Psi}}|_{LRS}^2,
\end{eqnarray*}
respectively, which recovers the conditions for the LRS II spacetime in the previous subsection.

In the next section we consider the stability of MOTS in spacetimes and provide specific examples.

\section{On the stability of MOTS in the LRS II spacetimes}\label{h5}

The stability of MOTS in a spacetime is a key ingredient in locating boundaries of trapped regions in a black hole spacetime. It also gives insights into the allowed topologies of foliation surfaces \(S\) in a spacetime. It was shown in \cite{rn1} that a necessary condition for the Hawking's black hole topology theorem to hold is for the MOTS to be stable. An MOTS \(S\) is said to be stable if given a deformation \(S_t\), the associated outgoing null expansion scalar is somewhere positive on the \(S_t\). This means that \(S\) becomes untrapped once \(S\) is deformed. The method for analyzing the stability of MOTS is well formed in \cite{and1,and2,gg10,gg11}. In this section we examine the stability of MOTS in LRS II spacetimes. We follow the convention in \cite{gg11}. 

Let \(S\) be an MOTS in an initial data set \(\left(\Sigma,h,\chi\right)\) with outward normal vector \(e^a\). Consider variations \(t\mapsto S_t\) of \(S=S_0\) with the variation vector field 

\begin{eqnarray}\label{22}
\mathcal{V}^a&=&\frac{\partial}{\partial t}|_ {t=0}\notag\\
&=&\Phi e^a,\ \Phi\in C^{\infty}\left(S\right).
\end{eqnarray}
Let \(\Theta\left(t\right)\) be the null expansion of \(S_t\) with respect to \(k_t^a=u^a+e^a_t\) (where \(e^a_t\) is the unit normal vector field to \(S_t\) in \(\Sigma\)). Then we have
\begin{eqnarray}\label{023}
\frac{\partial \Theta}{\partial t}|_{t=0} &=& L\left(\Phi\right),
\end{eqnarray}
where \(L\) is the operator
\begin{eqnarray*}
L:C^{\infty}\left(S\right)\longrightarrow C^{\infty}\left(S\right),
\end{eqnarray*}
given by
\begin{eqnarray}\label{024}
L\left(\Phi\right)&=&\left(-\Delta +2X^a\nabla_a + F + \nabla_aX^a - X_aX^a\right)\Phi,
\end{eqnarray}
(see \cite{and1,and2,gg1,gg10,gg11}) where

\begin{eqnarray}\label{025}
F&=&\frac{1}{2}R_S - \left(\mu + J^ae_a\right) - \frac{1}{2}\left(\chi_S\right)^{ab}\left(\chi_S\right)_{ab},\notag\\
\mu&=&\frac{1}{2}\left(R_{\Sigma} + \left(\left(\chi_{\Sigma}\right)^a_{\ a}\right)^2 - \left(\chi_{\Sigma}\right)^{ab}\left(\chi_{\Sigma}\right)_{ab}\right),\\
J^a&=&\nabla_b\left( \chi_{\Sigma}\right)^{ba}  - d^a\left(\chi_{\Sigma}\right)^b_{\ b}.\notag
\end{eqnarray}
The quantity \(R_S\) denotes the scalar curvature on \(S\) given by (in terms of of the Gaussian curvature \(K\)),
\begin{eqnarray}\label{026}
R_S=2K,
\end{eqnarray}
and \(R_{\Sigma}\) is the scalar curvature on \(\Sigma\) given by
\begin{eqnarray}\label{027}
R_{\Sigma}&=&-2\left(\hat{\phi}+\frac{3}{4}\phi^2-K\right),
\end{eqnarray}
where 
\begin{eqnarray}\label{028}
\hat{\phi}&=&-\frac{1}{2}\phi^2+\left(\frac{1}{3}\Theta + \Sigma\right)\left(\frac{2}{3}\Theta-\Sigma\right)-\frac{2}{3}\left(\rho+\Lambda\right)-\mathcal{E} -\frac{1}{2}\Pi
\end{eqnarray}
\cite{rit1}. The quantities \(J^a\) and \(\mu\) are the local momentum and local energy densities respectively. For LRS II spacetimes the vector field \(X^a\) is defined as
\begin{eqnarray}
X^a&=&e^b\nabla_bu^a\notag\\
&=&\left(\frac{1}{3}\Theta + \Sigma\right)e^a=\hat{u}^a.
\end{eqnarray}
 
We compute
\begin{eqnarray*}
\nabla_aX^a&=&\frac{1}{3}\hat{\Theta}+\hat{\Sigma}+\left(\frac{1}{3}\Theta+\Sigma\right)\left(A+\phi\right),\\
X_aX^a&=&\left(\frac{1}{3}\Theta+\Sigma\right)^2,\\
X^a\nabla_a&=&\left(\frac{1}{3}\Theta+\Sigma\right)e^a\nabla_a.
\end{eqnarray*} 

Given an initial data set, the dominant energy condition (DEC) is given by

\begin{eqnarray}\label{pho1}
\mu&\geq &\sqrt{J^aJ_a}.
\end{eqnarray}

The MOTS is said to be stable if there exists a real number \(\lambda\) (called the principal eigenvalue of \(L\)) such that \(L\left(\Phi\right)=\lambda\Phi\) (\(\Phi\) is the associated eigenfunction), \(\lambda\geq0\) and \(\Phi\) is strictly positive. Strict stability requires \(\lambda>0\).

We calculate
\begin{eqnarray}\label{030}
\left(\chi_S\right)^{ab}\left(\chi_S\right)_{ab}&=&-\frac{1}{2}\left(\frac{2}{3}\Theta - \Sigma\right)^2 + \frac{1}{2}\phi^2,\notag\\
\left(\chi_{\Sigma}\right)^{ab}\left(\chi_{\Sigma}\right)_{ab}&=&\frac{1}{3}\Theta^2 + \frac{3}{2}\Sigma^2,\\
\left(\chi_{\Sigma}\right)^a_{\ a}&=&\Theta,\notag
\end{eqnarray}
where we have \(A_a=\dot{u}_a\) and we have used the relation
\begin{eqnarray*}
\nabla_au_b&=&-A_au_b + \frac{1}{3}h_{ab}\Theta + \sigma_{ab}
\end{eqnarray*}
\cite{rit1}. We then have
\begin{eqnarray}\label{031}
J^a&=&\left(\frac{3}{2}\Sigma^2-\dot{\Theta}\right)u^a+\left(\frac{1}{2}A\Sigma-Q\right)e^a,\notag\\
\mu&=&\rho+\Lambda,
\end{eqnarray}
and \(F\) then becomes
\begin{eqnarray*}
F&=&\frac{1}{2}A\Sigma-\frac{2}{3}\left(\rho+\Lambda\right)-\mathcal{E}-\frac{1}{2}\Pi
-Q.
\end{eqnarray*}

The DEC in \eqref{pho1} reduces to
\begin{eqnarray}\label{032}
\rho + \Lambda &\geq &\sqrt{-\left(\frac{3}{2}\Sigma^2-\dot{\Theta}\right)^2+\left(\frac{1}{2}A\Sigma-Q\right)^2}.
\end{eqnarray}
The square root on the RHS of \eqref{032} provides the extra condition that
\begin{eqnarray}\label{mar12}
\left(\frac{1}{2}A\Sigma-Q\right)^2&\geq& \left(\frac{3}{2}\Sigma^2-\dot{\Theta}\right)^2.
\end{eqnarray}

We now write \eqref{024} as 
\begin{eqnarray}\label{momomo1}
L\left(\Phi\right)&=&\bar{M}\Phi,
\end{eqnarray}
where 
\begin{eqnarray*}
\bar{M}&=&-\Delta+2\left(\frac{1}{3}\Theta+\Sigma\right)e^a\nabla_a + \frac{1}{2}A\Sigma-\mathcal{E}-\frac{1}{2}\Pi-Q+\left(\frac{1}{3}\Theta+\Sigma\right)\left(A+\phi\right)\\
&&-\frac{2}{3}\left(\rho+\Lambda\right)+\frac{1}{3}\hat{\Theta}+\hat{\Sigma}+\left(\frac{1}{3}\Theta+\Sigma\right)^2.
\end{eqnarray*}
Then stability requires the existence of a nonnegative \(\lambda\) such that 
\begin{eqnarray}\label{mar7}
L\Phi&=&\bar{M}\Phi\notag\\
&=&\lambda\Phi.
\end{eqnarray}
In the time symmetric case, \(S\) is minimal (the mean curvature one-form vanishes) and \(X^a=0\) (\(\Theta,\Sigma=0\)). Equation \eqref{024} then reduces to the classical stability operator in minimal surface theory:
\begin{eqnarray}\label{tod1}
L\left(\Phi\right)&=&-\left(\Delta-F\right)\Phi.
\end{eqnarray}

For spherically symmetric spacetimes, the Laplacian is zero and \(\rho,\Lambda,\Pi, Q\) are all vanishing. Showing stability of the MOTS then amounts to solving 
\begin{eqnarray}\label{mic}
\left(\lambda+\mathcal{E}\right)\Phi &=&0.
\end{eqnarray} 
Since \(\Phi\) is required to be strictly positive, the solution to \eqref{mic} is
\begin{eqnarray}\label{sunday1}
\lambda&=&-\mathcal{E}.
\end{eqnarray}
The MOTS is then stable if \(\mathcal{E}\leq 0\) and strictly stable if \(\mathcal{E}< 0\). 

As an example, take the Schwarzschild spacetime with
\begin{eqnarray*}
\mathcal{E}&=&-\frac{2m}{r^3}.
\end{eqnarray*}
Since \(r,m>0\), \(\mathcal{E}<0\), and we have \(\lambda>0\). We thus have strict stability. The combination of strict stability, and noting that the DEC in \eqref{032} is satisfied for a spherically symmetric case, implies that the surfaces \(S\) are topological \(2\)-spheres \cite{rn1}. In fact we see that the solution in \eqref{sunday1} implies that for a spherically symmetric vacuum spacetime \(\mathcal{E}\) is necessarily less than zero.

As another example, consider the case of the Oppenheimer-Snyder collapse. On the inner horizon \(\Phi=\Phi\left(t\right)\). Then the Laplacian and \(\nabla_a\) terms vanish. We also have the vanishing of
\begin{eqnarray*}
\Sigma, A, Q, \Pi, \Lambda, \mathcal{E},
\end{eqnarray*}
as well as all of the hat derivatives. Then \eqref{mar7} reduces to 
\begin{eqnarray}\label{kayla1}
\left(\frac{1}{3}\Theta\phi-\frac{2}{3}\rho-\frac{1}{9}\Theta^2\right)\Phi&=&\lambda\Phi,
\end{eqnarray}
which implies
\begin{eqnarray}\label{kayla2}
\lambda &=&\frac{1}{3}\Theta\phi-\frac{2}{3}\rho-\frac{1}{9}\Theta^2.
\end{eqnarray}
From the field equation for \(\hat{\phi}\) we can write 
\begin{eqnarray}\label{mar91}
-\frac{2}{3}\rho-\frac{1}{9}\Theta^2&=&\frac{1}{2}\phi^2-\frac{1}{3}\Theta^2.
\end{eqnarray}
At the horizon,
\begin{eqnarray}\label{mar92}
\phi&=&-\frac{2}{3}\Theta.
\end{eqnarray}
Combining \eqref{mar91},\eqref{mar92},\eqref{kayla2} we have
\begin{eqnarray}\label{kayla4}
\lambda&=&-\frac{1}{3}\Theta^2,
\end{eqnarray}
which is always negative unless \(\Theta=0\). Hence the inner MOTS \(S\) in the Oppenheimer-Snyder collapse are unstable. In this case, deforming an MOTS \(S\) leaves \(S\) inner trapped and this continues toward the singularity. This is precisely what we mean by collapse in the OS dust model.

\section{Conclusion}\label{conclusion}

Earlier work on the evolution of MOTS and trapped regions was restricted to LRS II spacetimes. We have extended those results to a general \(4\)-dimensional spacetime in general relativity in our treatment. In this paper we used the covariant $1+1+2$ semitetrad formalism to transparently demonstrate various geometrical and dynamical properties of trapped regions and MOTS, in terms of  well defined geometrical variables. This enabled us to extend Hawking's topology theorem to an wider class of spacetimes, and also provided a useful description of the time evolution of MOTS in terms of the matter and curvature quantities. We also performed a detailed stability analysis of MOTS using this formalism, and this regained all the earlier known results. We obtained a very insightful physical description as to why the inner MOTS in the Oppenheimer-Snyder collapse is unstable, while the outer MOTS is stable.

\section*{Acknowledgments}
AS and RG are supported by National Research Foundation (NRF), South Africa. SDM 
acknowledges that this work is based on research supported by the South African Research Chair Initiative of the Department of
Science and Technology and the National Research Foundation.


\end{document}